\documentclass[10pt,twocolumn]{article}

\usepackage{amsmath,amssymb}
\usepackage{amsthm}
\usepackage{enumitem}
\usepackage{geometry}
\usepackage{hyperref}
\usepackage{listings}
\usepackage{times}

\theoremstyle{plain}
\newtheorem{theorem}{Theorem}[section]
\newtheorem{lemma}[theorem]{Lemma}
\newtheorem{proposition}[theorem]{Proposition}

\theoremstyle{definition}

\theoremstyle{remark}

\title{\vspace{-1em}Many Scores Force Many Reads:\\
An Optimality Proof for Score of Parentheses}
\author{Andrew Au\\[0.2em]
\small Independent Researcher\\[0.2em]
\small Corresponding author: \texttt{cshung@gmail.com}}
\date{\vspace{-1.5em}}

\begin{document}

\twocolumn[
\maketitle
\vspace{-1em}
]

\begin{abstract}
LeetCode Problem 856, \emph{Score of Parentheses}, admits a one-pass
algorithm using \(O(n)\) time and \(O(1)\) auxiliary registers.  Its
correctness follows from a local interpretation: every occurrence of
\texttt{()} contributes a power of two determined by its nesting depth.
The space bound is immediately optimal in the usual programming-challenge
model, but the linear-time bound is less obvious.  In particular, a direct
adversary that forces an algorithm to inspect every character is difficult
to maintain under the promise that the input is balanced.

We instead prove a counting lower bound.  For every \(k\), we construct
\(2^k\) balanced parenthesis strings of the same length \(4k+2\), all
having distinct scores.  The construction encodes each bit by one of two
operations, changing a current score \(v\) to either \(2v+1\) or \(2v+2\).
Consequently, any deterministic character-query decision tree computing
the score needs depth at least \(k\).  A two-character padding argument
extends the result to every even input length, proving an
\(\Omega(n)\) worst-case time lower bound.  The one-pass algorithm is
therefore asymptotically optimal in the unit-cost RAM model.
\end{abstract}

\noindent\textbf{Keywords:}
parentheses; decision trees; query complexity; lower bounds;
information theory; LeetCode.

\section{Introduction}

A balanced parenthesis string is generated recursively from the empty
string by concatenation and enclosure.  The \emph{score of parentheses}
is defined by the rules
\begin{align*}
\operatorname{score}(\texttt{()})&=1,\\
\operatorname{score}(AB)&=\operatorname{score}(A)+\operatorname{score}(B),\\
\operatorname{score}(\texttt{(}A\texttt{)})
  &=2\operatorname{score}(A),
\end{align*}
where \(A\) and \(B\) are nonempty balanced strings.  LeetCode
Problem 856 asks for the score of a given balanced string \(s\)
\cite{LeetCode856}.

A standard stack solution mirrors the recursive definition and uses
\(O(n)\) auxiliary space.  A more economical observation eliminates the
stack.  Each adjacent pair \texttt{()} is an indivisible unit of score
one, and every surrounding pair of parentheses doubles its contribution.
It is therefore enough to track the current power of two and add it
whenever an adjacent pair closes.

This gives a one-pass \(O(n)\)-time, \(O(1)\)-register algorithm.  The
space bound leaves little room for improvement.  The interesting question
is whether the time bound is also optimal.

The informal claim that ``the algorithm must read the input'' is not by
itself a proof.  A correct algorithm need not necessarily inspect every
position, and the promise that the string is balanced allows some
characters to be inferred from others.  Rather than force every position
to be inspected, we prove the weaker but asymptotically sufficient fact
that a linear number of binary observations is necessary.

The proof uses the number of possible outputs.  Counting outputs alone is
not generally a time lower bound; it becomes one only after the input
access model is specified.  In the character-query model, each probe has
only two possible outcomes.  A depth-\(q\) decision tree can therefore
distinguish at most \(2^q\) output values.  Our main construction supplies
exponentially many distinct scores among inputs of one fixed length.

\section{A Constant-Space Algorithm}

The following implementation is a slightly explicit version of the
one-pass algorithm.  Initializing \(f\) is redundant for a valid nonempty
input, whose first character is necessarily \texttt{(}, but makes its
meaning visible from the start.

\par\noindent\begin{minipage}{\columnwidth}
\begin{lstlisting}[language=Python]
class Solution:
    def scoreOfParentheses(self, s: str) -> int:
        ans = 0
        m = 1
        f = 0

        for c in s:
            if c == '(':
                m *= 2
                f = 1
            else:
                m //= 2
                ans += m * f
                f = 0

        return ans
\end{lstlisting}
\end{minipage}\par

The variable \(m\) records a power of two associated with the current
depth.  The flag \(f\) records whether the current character closes an
adjacent pair \texttt{()}.

\subsection{Atomic pairs}

For an occurrence of \texttt{()} in a balanced string, define its
\emph{enclosing depth} to be the number of matched parenthesis pairs that
strictly contain it.  For example, the three atomic pairs in
\texttt{(()())()} have enclosing depths \(1,1\), and \(0\), respectively.

\begin{lemma}
\label{lem:atomic}
For every nonempty balanced parenthesis string \(S\),
\[
\operatorname{score}(S)
  =\sum_{p}2^{d(p)},
\]
where the sum ranges over all occurrences \(p\) of \texttt{()} in \(S\)
and \(d(p)\) is the enclosing depth of \(p\).
\end{lemma}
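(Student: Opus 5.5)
We proceed by structural induction on \(S\), following the three generating rules of the score. The key observation is that the set of occurrences of \texttt{()} and their enclosing depths behave predictably under concatenation and enclosure. The one subtlety is that a nonempty balanced string has a unique decomposition as either a concatenation \(AB\) of nonempty balanced strings or an enclosure \(\texttt{(}A\texttt{)}\), or is \texttt{()} itself. Since the score is defined recursively, we assume it is well defined. To avoid any ambiguity, we induct on the length of \(S\) and use whichever decomposition the definition applies. The claimed identity is additive, so any decomposition gives the same result.

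\emph{Base case.} For \(S=\texttt{()}\) there is exactly one occurrence of \texttt{()}, and no pair strictly contains it. Hence the right-hand side is \(2^0=1=\operatorname{score}(\texttt{()})\).

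\emph{Concatenation.} Let \(S=AB\) with \(A,B\) nonempty and balanced. First, an occurrence of \texttt{()} cannot straddle the boundary. The last character of a nonempty balanced \(A\) is \texttt{)} and the first character of \(B\) is \texttt{(}, so a pair straddling the boundary would read \texttt{)(} instead. Therefore the occurrences in \(S\) are exactly those in \(A\) together with those in \(B\). Second, matching in \(S\) restricts to matching in \(A\) and in \(B\), because each of \(A\) and \(B\) is balanced, so no matched pair of \(S\) has one end in \(A\) and the other in \(B\). Consequently the pairs strictly containing an occurrence in \(A\) are exactly the ones they were within \(A\), and enclosing depths are preserved. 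The sum thus splits, and the induction hypothesis gives \(\operatorname{score}(A)+\operatorname{score}(B)\).

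\emph{Enclosure.} Let \(S=\texttt{(}A\texttt{)}\) with \(A\) nonempty and balanced. The outer characters are matched to each other, and \(A\) begins with \texttt{(} and ends with \texttt{)}. Hence no new adjacent \texttt{()} arises at either boundary: the first two characters read \texttt{((} and the last two read \texttt{))}. The occurrences in \(S\) are therefore exactly those in \(A\). Each of them gains exactly one enclosing pair, the outer one, while the others are inherited unchanged from \(A\) by the same matching-restriction argument. Every term \(2^{d(p)}\) doubles, giving \(2\operatorname{score}(A)\).

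The main step needing care is the claim that the matching structure of \(S\) restricts to that of its balanced pieces. I would justify it with the standard prefix-sum (height) characterization: the partner of an opening parenthesis is the first later position at which the height returns to its level. A balanced factor returns to its starting height at its own end, so the partner of any parenthesis inside the factor lies inside it as well. With this fact in hand, the three cases are routine.
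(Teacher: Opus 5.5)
Your proof is correct and follows the paper's approach: structural induction over the base case, concatenation, and enclosure. You also supply details the paper leaves implicit, namely that no occurrence of \texttt{()} straddles a boundary and that matching restricts to balanced factors. One small slip: the split $S=AB$ is not unique (e.g.\ \texttt{()()()}), but your argument never uses uniqueness, since the identity holds for whichever decomposition is applied.
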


\begin{proof}
Proceed by structural induction.

For \(S=\texttt{()}\), there is one atomic pair at depth zero, so the sum
is \(2^0=1\).

If \(S=AB\), the atomic pairs of \(A\) and \(B\) retain their enclosing
depths.  The proposed sum is therefore the sum of the corresponding
expressions for \(A\) and \(B\), matching
\(\operatorname{score}(AB)=\operatorname{score}(A)+
\operatorname{score}(B)\).

If \(S=\texttt{(}A\texttt{)}\), every atomic pair of \(A\) gains one
enclosing pair.  Every term \(2^{d(p)}\) is consequently doubled,
matching
\(\operatorname{score}(\texttt{(}A\texttt{)})
=2\operatorname{score}(A)\).
\end{proof}

\begin{theorem}
\label{thm:algorithm}
The algorithm returns the score of \(s\) in \(O(n)\) time using \(O(1)\)
auxiliary registers.
\end{theorem}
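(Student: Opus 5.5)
The plan is to prove correctness through a loop invariant that ties the registers to the prefix read so far, and then to invoke Lemma~\ref{lem:atomic}. Let the loop have processed the prefix \(s[0..i)\), and let \(h_i\) be its depth, i.e.\ the number of \texttt{(} minus the number of \texttt{)} in the prefix. The invariant to establish is:
\begin{itemize}[nosep]
\item[(i)] \(m = 2^{h_i}\);
\item[(ii)] \(f=1\) exactly when \(i>0\) and \(s[i-1]=\texttt{(}\);
\item[(iii)] \(\mathit{ans}\) equals \(\sum_p 2^{d(p)}\), taken over the atomic pairs \(p\) contained entirely in \(s[0..i)\).
\end{itemize}
Initially \(h_0=0\), \(m=1\), \(f=0\) and \(\mathit{ans}=0\), so all three parts hold trivially.

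For the inductive step I split on the next character. On \texttt{(} the depth increases by one and \(m\) doubles, so (i) is kept; \(f\) becomes \(1\), matching (ii); and no new atomic pair is completed, so (iii) is unchanged. On \texttt{)} the depth drops to \(h_{i+1}=h_i-1\) and \(m\) is halved. Because \(s\) is balanced we have \(h_i\ge 1\), so \(m\ge 2\) before the halving and integer division is exact, giving \(m=2^{h_{i+1}}\). A new atomic pair is completed exactly when the previous character was \texttt{(}, which by (ii) is exactly when \(f=1\). In that case the pair occupies positions \(i-1,i\), and its enclosing depth is the depth just outside it. That depth is \(h_{i+1}=h_{i-1}\), so the added term \(m\cdot f=2^{h_{i+1}}\) is the correct one. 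Finally \(f\) is reset to \(0\), which preserves (ii).

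The one real subtlety, and the step I would check most carefully, is identifying the enclosing depth of an atomic pair with the prefix depth \(h_{i+1}\). This holds because the matched pairs that strictly contain positions \(i-1,i\) are exactly the pairs whose opening parenthesis lies before position \(i-1\) and is still unmatched there. There are exactly \(h_{i-1}\) of these, since in a balanced string the unmatched opens of a prefix are exactly the enclosing pairs. At termination \(i=n\), so (iii) gives \(\mathit{ans}=\sum_p 2^{d(p)}\) over all atomic pairs, and this equals \(\operatorname{score}(s)\) by Lemma~\ref{lem:atomic}.

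For the complexity bounds, each of the \(n\) iterations performs a constant number of comparisons, one multiplication or division by \(2\), and one addition, all of which are unit cost in the RAM model. This gives \(O(n)\) time. Only three registers are used, namely \(\mathit{ans}\), \(m\) and \(f\). Their values are at most \(2^{n/2}\) and \(\operatorname{score}(s)\le 2^{n/2}\), so they are \(O(1)\) registers in the usual programming-challenge model, where word size is not charged. I would state that convention explicitly.
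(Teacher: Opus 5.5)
Your proof is correct and follows the paper's argument. Both maintain $m=2^d$ for the current depth, use $f$ to detect when a closing parenthesis completes an atomic pair, note that the halved $m$ equals that pair's term $2^{d(p)}$, and conclude via Lemma~\ref{lem:atomic} with constant work per character. Your explicit three-part loop invariant, the check that integer division is exact because the prefix depth is at least one before each \texttt{)}, and the justification that a prefix's unmatched opens are exactly the enclosing pairs make rigorous steps that the paper states more briefly.
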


\begin{proof}
After any processed prefix, let \(d\) be the number of unmatched opening
parentheses.  The updates to \(m\) maintain \(m=2^d\): an opening
parenthesis increments \(d\) and doubles \(m\), while a closing
parenthesis decrements \(d\) and halves \(m\).

The flag \(f\) equals one exactly when the previously processed character
was \texttt{(}.  Thus, on processing a closing parenthesis, \(f=1\)
exactly when that character completes an occurrence of \texttt{()}.
After \(m\) is halved, its value is \(2^d\), where \(d\) is precisely the
number of pairs enclosing that occurrence.  The algorithm therefore adds
the term assigned to this atomic pair by Lemma~\ref{lem:atomic}.  It adds
nothing on any other closing parenthesis.  Hence \texttt{ans} is the sum
in Lemma~\ref{lem:atomic} when the scan ends.

The loop performs constant work for each of the \(n\) characters and
retains only \texttt{ans}, \(m\), \(f\), and the loop variable.
\end{proof}

\section{Computational Model}
\label{sec:model}

We use the unit-cost random-access model customary in
programming-challenge analysis.  The input length is known.  The input is
stored as an array with one parenthesis per addressable cell, and probing
one cell costs one unit of time.  A probe returns one of the two symbols
\texttt{(} and \texttt{)}.  The algorithm may choose probes adaptively,
revisit positions, and perform arbitrary unit-cost arithmetic and control
operations between probes.  Integer registers are assumed large enough
to hold the score.

The algorithm is deterministic and must return the exact score for every
balanced input of the specified length.  This is the deterministic
character-query model, represented by a binary decision tree
\cite{CLRS,AroraBarak}.  Every internal node specifies a position to
probe, its two outgoing edges represent the observed character, and each
leaf carries the returned score.

This model deliberately excludes the cost of representing a large
integer.  In a bit-cost model, a linear lower bound is immediate: a
fully nested string of length \(n\) has score
\[
2^{n/2-1},
\]
whose binary representation has \(n/2\) bits.  That observation is valid
but does not explain why the input itself contains a linear amount of
relevant information.  Our lower bound continues to hold when arithmetic
on the score is free.

\section{An Exponential Family of Scores}

We now construct many equal-length balanced strings with different
scores.  For any nonempty balanced string \(X\), define
\begin{align*}
F_0(X)&=\texttt{(}X\texttt{)()},\\
F_1(X)&=\texttt{(}X\texttt{())}.
\end{align*}
In words, \(F_0\) encloses \(X\) and then appends \texttt{()}, whereas
\(F_1\) first appends \texttt{()} and then encloses the result.  Both
operations add four characters and preserve balancedness.

\begin{lemma}
\label{lem:transform}
If \(X\) has score \(v\), then
\[
\operatorname{score}(F_0(X))=2v+1,
\qquad
\operatorname{score}(F_1(X))=2v+2.
\]
\end{lemma}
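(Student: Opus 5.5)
The plan is to apply the three defining rules of the score directly, reading each of \(F_0(X)\) and \(F_1(X)\) as a composition of concatenation and enclosure of nonempty balanced pieces. The only care needed is to confirm that each decomposition uses the rules legitimately, meaning every piece is nonempty and balanced. Alternatively, Lemma~\ref{lem:atomic} gives the same result by tracking depths, and I would mention it as a cross-check.

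For \(F_0(X)=\texttt{(}X\texttt{)()}\), I would write it as the concatenation \(AB\) with \(A=\texttt{(}X\texttt{)}\) and \(B=\texttt{()}\). Both pieces are nonempty and balanced, since \(X\) is nonempty and balanced. The enclosure rule gives \(\operatorname{score}(A)=2v\), and the base rule gives \(\operatorname{score}(B)=1\). Applying the concatenation rule then yields \(2v+1\).

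For \(F_1(X)=\texttt{(}X\texttt{())}\), I would first let \(Y=X\texttt{()}\), which is nonempty and balanced. The concatenation and base rules give \(\operatorname{score}(Y)=v+1\). Since \(F_1(X)=\texttt{(}Y\texttt{)}\), the enclosure rule gives \(2(v+1)=2v+2\).

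The atomic-pair check runs as follows. In \(F_0\), every atomic pair of \(X\) gains one enclosing pair, contributing \(2v\) in total, and the appended \texttt{()} has depth \(0\), contributing \(1\). In \(F_1\), both the atomic pairs of \(X\) and the new pair have their depths raised by one, giving \(2v+2\).

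I expect no real obstacle here. The one subtle point is that an atomic pair of \(X\) remains an atomic pair in the larger string. This holds because the characters adjacent to \(X\) in \(F_0(X)\) and \(F_1(X)\) cannot create a new adjacent \texttt{()} straddling the boundary of \(X\), since \(X\) begins with \texttt{(} and ends with \texttt{)}. The rule-based argument avoids this issue altogether, so I would present that one as the proof.
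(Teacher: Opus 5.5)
Your proof is correct and matches the paper's argument: you read \(F_0(X)\) as the concatenation of \(\texttt{(}X\texttt{)}\) and \(\texttt{()}\), and \(F_1(X)\) as the enclosure of \(X\texttt{()}\), which gives scores \(2v+1\) and \(2(v+1)\). Your check that each piece is nonempty and your atomic-pair cross-check via Lemma~\ref{lem:atomic} are both sound, though the paper does not need them.
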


\begin{proof}
The defining score rules give
\[
\operatorname{score}(\texttt{(}X\texttt{)()})
  =2v+1
\]
and
\[
\operatorname{score}(\texttt{(}X\texttt{())})
  =2\operatorname{score}(X\texttt{()})
  =2(v+1).
\]
\end{proof}

Let \(b=b_1b_2\cdots b_k\in\{0,1\}^k\), with the bits listed in the
order in which they are processed.  Begin with
\[
X_{\epsilon}=\texttt{()},
\]
and recursively set
\[
X_{b_1\cdots b_i}=F_{b_i}(X_{b_1\cdots b_{i-1}}).
\]
The construction may equivalently be described as processing the bits of
an ordinary binary word from most significant to least significant.
Processing them in the reverse order merely bit-reverses the indexing of
the same family.

\begin{proposition}
\label{prop:family}
For every \(b\in\{0,1\}^k\), the string \(X_b\) is balanced, has length
\(4k+2\), and has score
\[
\operatorname{score}(X_b)
  =2^{k+1}-1+\sum_{i=1}^{k}b_i2^{k-i}.
\]
\end{proposition}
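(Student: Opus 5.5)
The plan is induction on \(k\), carrying all three claims (balancedness, length, score) together along the recursion \(X_{b_1\cdots b_i}=F_{b_i}(X_{b_1\cdots b_{i-1}})\).

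\emph{Base case.} For \(k=0\), the string \(X_\epsilon=\texttt{()}\) is balanced and has length \(2\). Its score is \(1=2^{1}-1\), with the sum empty.

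\emph{Balancedness and length.} Suppose the claims hold for a prefix \(b'=b_1\cdots b_{k-1}\). The string \(X_{b'}\) is nonempty and balanced, so both \(F_0(X_{b'})\) and \(F_1(X_{b'})\) are formed by enclosure and concatenation of balanced strings. Each adds exactly four characters, so the length goes from \(4(k-1)+2\) to \(4k+2\).

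\emph{Score.} Lemma~\ref{lem:transform} gives the one-step rule
\[
\operatorname{score}(X_b)=2v+1+b_k,
\]
where \(v=\operatorname{score}(X_{b'})\). The induction hypothesis supplies
\[
v=2^{k}-1+\sum_{i=1}^{k-1}b_i2^{k-1-i}.
\]
Substituting, we get
\[
2v+1+b_k=2^{k+1}-2+\sum_{i=1}^{k-1}b_i2^{k-i}+1+b_k.
\]
The term \(b_k\) is exactly the \(i=k\) summand \(b_k2^{0}\). The constants combine to \(2^{k+1}-1\), which gives the stated formula.

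\emph{Main obstacle.} The only delicate point is the bookkeeping of exponents. The bit \(b_i\) is doubled once for each later step, and there are \(k-i\) such steps, so it should carry weight \(2^{k-i}\). The shift of the index \(k-1-i\) to \(k-i\) under doubling must be checked carefully.

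As a sanity check, the recurrence \(u_k=2u_{k-1}+1\) with \(u_0=1\) gives \(u_k=2^{k+1}-1\), which matches the constant term.
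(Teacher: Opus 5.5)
Your proof is correct and follows the paper's argument essentially step for step. Both proceed by induction along the recursion, get balancedness and length from the fact that \(F_0,F_1\) preserve balancedness and add four characters, and get the score from the one-step rule \(v\mapsto 2v+1+b_k\) of Lemma~\ref{lem:transform}, shifting exponents \(2^{k-1-i}\mapsto 2^{k-i}\) under doubling; the paper's closing injectivity remark is extra and not part of this proposition's statement.
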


\begin{proof}
Balancedness follows inductively because both \(F_0\) and \(F_1\) map a
balanced string to a balanced string.  The initial length is two and each
of the \(k\) transformations adds four, giving \(4k+2\).

For the score formula, use induction on \(k\).  The empty word gives
\(\operatorname{score}(X_{\epsilon})=1=2^{1}-1\).  Suppose a prefix
\(b_1\cdots b_i\) has score
\[
2^{i+1}-1+\sum_{j=1}^{i}b_j2^{i-j}.
\]
Write this score as \(v_i\).  Appending the next construction bit
\(b_{i+1}\) applies Lemma~\ref{lem:transform}, giving
\begin{align*}
v_{i+1}
&=2v_i+1+b_{i+1}\\
&=2^{i+2}-1
  +\sum_{j=1}^{i}b_j2^{i+1-j}+b_{i+1},
\end{align*}
which is the required formula for \(i+1\).

The formula also makes injectivity explicit.  If \(X_b\) and \(X_c\)
have equal scores, subtracting \(2^{k+1}-1\) shows that the \(k\)-bit
words \(b\) and \(c\) represent the same integer.  Uniqueness of binary
representation gives \(b=c\).
\end{proof}

Consequently the scores are the consecutive values
\[
2^{k+1}-1,\,
2^{k+1},\,
\ldots,\,
3\cdot2^k-2.
\]
In particular, all \(2^k\) constructed strings have distinct scores.
The first levels are
\[
\begin{array}{c|c|c}
b & |X_b| & \operatorname{score}(X_b)\\ \hline
\epsilon & 2 & 1\\
0 & 6 & 3\\
1 & 6 & 4\\
00 & 10 & 7\\
01 & 10 & 8\\
10 & 10 & 9\\
11 & 10 & 10
\end{array}
\]

\section{The Decision-Tree Lower Bound}

\begin{lemma}
\label{lem:leaves}
If a deterministic algorithm returns \(R\) distinct output values on a
set of inputs, then its character-query decision tree has at least \(R\)
leaves and worst-case depth at least \(\lceil\log_2 R\rceil\).
\end{lemma}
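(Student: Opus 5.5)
The plan is to prove the two claims in order: first the leaf count, then the depth bound, using only the structure of the decision tree from Section~\ref{sec:model}.

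\textbf{Leaf count.} Fix the algorithm's decision tree \(T\). On any input in the given set, the computation starts at the root. At each internal node it follows the edge labelled by the character at the probed position. It ends at a single leaf, and the output is that leaf's label. Since each leaf carries exactly one returned value, the map sending an input to its leaf, followed by reading off the label, reproduces the algorithm's output. The \(R\) distinct output values are therefore all labels of leaves actually reached, and distinct values must sit on distinct leaves. Concretely, pick one input realizing each value; the leaves these inputs reach are pairwise distinct because their labels differ. So \(T\) has at least \(R\) leaves.

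\textbf{Depth bound.} Every internal node has at most two children, one per symbol \texttt{(} or \texttt{)}. By induction on \(q\), a binary tree of depth \(q\) has at most \(2^q\) leaves: a single leaf has depth \(0\), and otherwise the root's at most two subtrees each have depth at most \(q-1\). Combining this with the leaf count gives \(R\le 2^q\), hence \(q\ge\log_2 R\). Since \(q\) is an integer, \(q\ge\lceil\log_2 R\rceil\).

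The depth \(q\) is the worst-case number of probes along any root-to-leaf path. Probes on the same path may repeat positions; a repeated probe only wastes depth, so revisiting positions needs no special treatment. Internal computation between probes is free in this model and does not create branching.

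I expect no real obstacle. The only point needing care is that the tree may contain leaves or branches unreachable by any valid balanced input, for example outcomes inconsistent with the promise. This does not matter, since we only need a lower bound on the number of leaves and an upper bound on leaves in terms of depth, and both directions tolerate extra leaves.
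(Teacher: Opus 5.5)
Your proof is correct and follows the paper's argument: inputs with distinct outputs must reach distinct leaves, giving at least \(R\) leaves, and a binary tree of depth \(q\) has at most \(2^q\) leaves, so \(2^q\ge R\). Your additions (the induction for the \(2^q\) bound, the integrality step for the ceiling, and the remark that unreachable leaves are harmless) fill in steps the paper leaves implicit.
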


\begin{proof}
A leaf carries one fixed output.  Inputs with different correct outputs
cannot terminate at the same leaf, so at least \(R\) leaves are required.
A binary tree of depth \(q\) has at most \(2^q\) leaves.  Therefore
\(2^q\geq R\), or \(q\geq\lceil\log_2R\rceil\).
\end{proof}

\begin{theorem}
\label{thm:lower}
For every even \(n\), any deterministic algorithm that computes the
score of every balanced parenthesis string of length \(n\) requires
\[
\left\lfloor\frac{n-2}{4}\right\rfloor
\]
character probes in the worst case.  Consequently, Score of Parentheses
has an \(\Omega(n)\) time lower bound in the model of
Section~\ref{sec:model}.
\end{theorem}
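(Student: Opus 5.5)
The plan is to combine Proposition~\ref{prop:family} with Lemma~\ref{lem:leaves}, using a padding step to reach every even length. Fix an even \(n\) and set \(k=\lfloor (n-2)/4\rfloor\). If \(n<6\) then \(k=0\) and there is nothing to prove, so assume \(k\ge 1\). Since \(n\) is even, \(n-2\) is even, and so \(r=n-(4k+2)\) lies in \(\{0,2\}\). Proposition~\ref{prop:family} supplies \(2^k\) balanced strings \(X_b\) of length \(4k+2\) with pairwise distinct scores. When \(r=0\) these strings already have length \(n\). When \(r=2\) I will pad each one to \(Y_b=X_b\texttt{()}\). This is balanced, has length \(n\), and by the concatenation rule has score \(\operatorname{score}(X_b)+1\). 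Adding the same constant preserves distinctness, so in either case we obtain \(2^k\) balanced strings of length exactly \(n\) with distinct scores.

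Next, take any deterministic algorithm that is correct on all balanced strings of length \(n\), and view it as a binary decision tree in the model of Section~\ref{sec:model}. Restricted to the family above, it must output \(2^k\) distinct values. By Lemma~\ref{lem:leaves}, the tree then has depth at least \(\lceil \log_2 2^k\rceil = k\). Hence some input in the family forces at least \(k=\lfloor (n-2)/4\rfloor\) probes.

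Two details need care. First, the decision tree's probes are arbitrary positions, and their outcomes are one of two symbols. The leaf-counting argument depends only on this binary branching, not on which positions are probed. Because the input length is known, the tree for length \(n\) is well defined. Second, each probe costs one unit of time, so the worst-case running time is at least the number of probes. Since \(\lfloor (n-2)/4\rfloor \ge n/4 - 1\), this gives the \(\Omega(n)\) bound.

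I expect the padding step to be the only real subtlety. One must check that appending \texttt{()} keeps the scores distinct and handles exactly the residue \(n\equiv 0 \pmod 4\); the case \(n\equiv 2\pmod 4\) needs no padding. Everything else is a direct application of the earlier results.
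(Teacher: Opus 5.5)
Your proposal is correct and follows the paper's proof essentially step for step. You use Proposition~\ref{prop:family} directly when \(n=4k+2\), pad each string with \texttt{()} when \(n=4k+4\) (which shifts every score by one and so keeps them distinct), and then apply Lemma~\ref{lem:leaves} to get depth at least \(k\).
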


\begin{proof}
First suppose \(n=4k+2\).  Proposition~\ref{prop:family} gives \(2^k\)
valid inputs of length \(n\) with distinct outputs.  By
Lemma~\ref{lem:leaves}, the decision tree has depth at least \(k\).

Now suppose \(n=4k+4\).  Append one copy of \texttt{()} to every string
in the length-\((4k+2)\) family.  Every resulting string is balanced, has
length \(n\), and has its score increased by exactly one.  The scores
therefore remain distinct, so the same \(k\)-probe lower bound applies.

In both cases,
\[
k=\left\lfloor\frac{n-2}{4}\right\rfloor.
\]
Since every probe costs one unit of time, the lower bound is
\(\Omega(n)\).
\end{proof}

\section{Optimality}

The upper and lower bounds now match asymptotically.

\begin{theorem}
In the unit-cost RAM and character-access model of
Section~\ref{sec:model}, the deterministic time complexity of Score of
Parentheses is
\[
\Theta(n).
\]
The one-pass algorithm of Section~2 achieves this bound using \(O(1)\)
auxiliary registers.
\end{theorem}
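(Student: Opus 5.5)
The plan is to combine the two bounds already proved.

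For the upper bound I will invoke Theorem~\ref{thm:algorithm}. The one-pass algorithm reads each of the \(n\) characters exactly once, that is, \(n\) probes. Between probes it performs a constant number of unit-cost operations: a doubling or halving of \(m\), one multiply-add into \texttt{ans}, and an update of \(f\). Since Section~\ref{sec:model} treats arithmetic on score-sized integers as unit cost, the total running time is \(O(n)\). The same theorem also shows that only the registers \texttt{ans}, \(m\), \(f\), and the loop variable are kept, which gives the \(O(1)\) auxiliary-register claim.

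For the lower bound I will invoke Theorem~\ref{thm:lower}. For every even \(n\), any deterministic algorithm needs at least \(\lfloor (n-2)/4\rfloor\) probes in the worst case. Each probe costs one unit of time, so the running time is bounded below by the probe count. The bound \(\lfloor (n-2)/4\rfloor \ge (n-5)/4\) is at least \(n/8\) once \(n\ge 10\). For the finitely many smaller even lengths, the trivial bound of one step suffices, since any algorithm must at least return a value. Together these give \(\Omega(n)\).

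There are two points to be careful about, and neither is a real obstacle.

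First, the \(\Theta\) is stated as a function of \(n\), and the only admissible inputs have even length. I will take \(n\) to range over even lengths; for odd \(n\) no balanced inputs exist, so the problem is vacuous there.

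Second, the lower bound must hold against the full RAM model, not only against decision trees. Any RAM algorithm in the model of Section~\ref{sec:model} induces a binary decision tree. It is obtained by recording each probe position, which is determined by the outcomes of earlier probes, and branching on the two possible symbols. Free computation between probes does not add depth, so the RAM running time is at least the depth of this tree. Matching \(O(n)\) with \(\Omega(n)\) then yields \(\Theta(n)\).
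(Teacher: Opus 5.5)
Your proposal is correct and takes the same route as the paper, whose proof likewise just combines Theorem~\ref{thm:algorithm} (the $O(n)$ upper bound with $O(1)$ registers) and Theorem~\ref{thm:lower} (the $\Omega(n)$ lower bound). Your additional remarks only spell out details the paper leaves implicit: the explicit bound $\lfloor (n-2)/4\rfloor \ge n/8$ for $n\ge 10$, the restriction to even $n$, and the observation that a RAM algorithm induces a decision tree whose depth lower-bounds its running time.
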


\begin{proof}
Theorem~\ref{thm:algorithm} gives the \(O(n)\) upper bound and constant
auxiliary-register usage.  Theorem~\ref{thm:lower} gives the
\(\Omega(n)\) worst-case lower bound.
\end{proof}

\section{Conclusion}

Balancedness makes a direct ``change one unread character'' adversary
awkward, but the output space provides a cleaner route.  Two
four-character transformations encode a bit while updating the score by
\(v\mapsto2v+1\) or \(v\mapsto2v+2\).  After \(k\) transformations,
equal-length valid strings realize \(2^k\) different scores, forcing
depth \(k\) in a binary character-query tree.  Thus the input carries a
linear amount of relevant information even when arithmetic and
large-integer representation are free.

\end{document}